\theoremstyle{plain}
\newtheorem{theorem}{Theorem}[section]
\newtheorem{proposition}[theorem]{Proposition}
\newtheorem{corollary}[theorem]{Corollary}
\newtheorem{lemma}[theorem]{Lemma}
\theoremstyle{definition}
\newtheorem{remark}[theorem]{Remark}
\newcommand{\abs}[1]{\lvert#1\rvert}
\newcommand{\norm}[1]{\lVert#1\rVert}
\newcommand{\bigabs}[1]{\bigl\lvert#1\bigr\rvert}
\newcommand{\term}[1]{{\textit{\textbf{#1}}}}
\DeclareMathOperator{\Span}{Span}
\def\one{\mathbb 1}
\title{Option spanning beyond $L_p$-models}
\date{\today}
\keywords{Spanning of options, market completeness, arbitrage, Kreps-Yan
Theorem, order continuous dual}
\subjclass[2010]{Primary: 91G20, 91B25, Secondary: 46B42}
\author{N.~Gao}
\address{School of Mathematics, Southwest Jiaotong University, Chengdu, Sichuan,
610000, China.}
\email{ngao@home.swjtu.edu.cn}
\author{F.~Xanthos}
\address{Department of Mathematics, Ryerson University, 350 Victoria St.,
Toronto, ON, M5B 2K3, Canada.}
\email{foivos@ryerson.ca}
\thanks{The authors are partially supported by an NSERC grant.}
\begin{document}
\begin{abstract}
The aim of this paper is to study the spanning power of options in a static
financial market that allows non-integrable assets. Our findings extend and
unify the results in \cite{GALVANI:09,GALVANI:10,Nachman} for $L_p$-models. We
also apply the spanning power properties to the pricing problem. In particular,
we show that prices on call and put options of a limited liability asset can be
uniquely extended by arbitrage to all marketed contingent claims written on the
asset.
\end{abstract}

\maketitle

\section{Introduction}

Throughout this paper, $\Omega$ stands for the state space of a financial
market, $\Sigma$ stands for the $\sigma$-algebra modelling the market
information structure, and $\mathbb{P}$ stands for a probability over
$(\Omega,\Sigma)$. The space of contingent claims, $X$, is modelled as an ideal
(i.e., solid subspace) of $L_0(\Sigma)$
containing the constant functions, which represent investments in the riskless
asset. A claim displays limited liabilities if it is positive.
For a limited liability claim $f$, its option space is the collection of all
portfolios of call and put options written on $f$, which can be identified as
follows:$$O_f=\Span\big\{\one, (f-k)^+:k\in\mathbb{R}\big\}.$$
The space of all contingent claims written on $f$
is identified as the space of all functions measurable with respect to
$\sigma(f)$, the sub-$\sigma$-algebra generated by $f$, i.e.,
$$L_0(\sigma(f)).$$

A stream of research has been devoted to the study of spanning power of options
on $f$, i.e., the size of $O_f$. In the seminal paper \cite{ROSS}, Ross showed
that if the state space $\Omega$ is finite
then the options on $f$ span the space of contingent claims written on $f$,
i.e., $$O_{f}=L_0(\sigma(f)),$$
and if, in addition, $f$ is one-to-one, then the option space of $f$ completes
the market, i.e., $$O_f=L_0(\Sigma).$$
These elegant results of Ross have inspired many successive contributions to
the study of options. See e.g.~\cite{BAPT05,FX1,FX2} for related results on
finite state spaces. In particular, they have also been examined for financial
markets with infinite state spaces

Nachman proved in \cite{Nachman} that if $X=L_p(\Sigma)$ ($1\leq p<\infty$),
then the options on $f$ span the space of contingent claims written on the asset
in two ways: approximating by a.e.~convergence or by $p$-th mean convergence.
Precisely, it was proved that an asset $x\in L_p(\Sigma)$ is a contingent claim
on $f\in L_p(\Sigma)$ iff there exists a sequence of portfolios of options on
$f$ converging a.e.~to $x$ iff there exists a sequence of portfolios of options
on $f$ converging in the $p$-th mean to $x$. That is,
$$\overline{O_f}^{a.e.}\cap
L_p(\Sigma)=\overline{O_f}^{\norm{\cdot}_p}=L_0(\sigma(f))\cap
L_p(\Sigma).\eqno(1)$$
Galvani (\cite{GALVANI:09}) and Galvani and Troitsky (\cite{GALVANI:10}) proved
further that if $\Omega$ is a Polish space equipped with the Borel
$\sigma$-algebra and $f$ is one-to-one and bounded, then $O_f$ completes the
market $X=L_p(\Sigma)$ ($1\leq p\leq \infty$). That is, for $1\leq p<\infty$,
$$\overline{O_f}^{a.e.}\cap L_p(\Sigma)=\overline{O_f}^{\norm{\cdot}_p}=
L_p(\Sigma),\eqno(2)$$
and$$\overline{O_f}^{a.e.}\cap
L_\infty(\Sigma)=\overline{O_f}^{w^*}=L_\infty(\Sigma).\eqno(3)$$

In this paper, we explore the spanning power of options in general spaces of
contingent claims. Our contributions here are two-fold. Firstly, the spaces of
contingent claims in our
setting can be modelled as any ideal of $L_0(\Sigma)$ which contains the
constant functions and admits a strictly positive order
continuous linear functional. This framework includes not only the $L_p$-space
($1\leq
p\leq \infty$) models, but also the much wider class of Orlicz space models as
well as many non-integrable space models which have been extensively used in the
theory of risk measures (see e.g.
\cite{Biagini:08,Biagini:10,Cheridito:09,Delbaen:09,GAOX:14,Orihuela:12}).

Secondly, we provide an approach to unify the norm and $w^*$-topologies used in
the results of Nachman, Galvani and Troitsky, and thus give more comprehensive
insight into the general structures of option spaces; see Theorem~\ref{main} and
Remark~\ref{rem}. The unification in our approach is due to the use of the
topology $\sigma(X,X_n^\sim)$, where $X_n^\sim$ is the set of all order
continuous linear functionals on $X$.

Observe that $(X_n^\sim)_+$ has a natural connection with linear
pricing functionals. Recall that a linear pricing functional $\phi$ on $X$ is
given by a state-price density
$y\geq 0$ via integration, i.e.$$\phi(x)=\int_\Omega xy\,\mathrm{d}\mathbb{P}
\text{
for all }x\in X,$$
where $y$ is measurable and satisfies $\int_\Omega
\abs{xy}\mathrm{d}\mathbb{P}<\infty$ for all $x\in X$. By Dominated Convergence
Theorem, it is easily seen that $\phi$ is order continuous 
on $X$. Conversely, by Radon-Nikodym theorem, one can easily see that each
positive order
continuous linear functional on $X$ has a positive density, and thus is a linear
pricing functional.
Therefore, $(X_n^\sim)_+$ is just the collection of linear pricing functionals
on $X$.

Because of this link, we are able to apply Theorem~\ref{main} and shed light on
the following general problem, raised in \cite{BROWN:91}: ``Under what
circumstances can prices on the marketed assets or basic derivative assets be
uniquely extended by arbitrage to prices on all derivative assets in a large
class and when is such an extension unique?''
In Theorem~\ref{arbitrage}, we prove that when the arbitrage condition is
understood as a no-free lunch condition (see \cite{Kreps81}), one can extend
uniquely the prices on $O_f$ to the marketed contingent claims written on $f$.

Finally, we mention that there is a stream of works studying market
completion using options in a continuous time setting. In this framework, the
model
is said to be complete, if any contingent claim payoff can be obtained as the
terminal value of a self-financing trading strategy. 
We refer the reader to the recent papers
\cite{DAVIS02,HUGO12,RH13,ROMA97,SCH15}.

\section{Preliminary results}

We refer to \cite{ALIP:06, ALIP:07} for all unexplained terminology and standard
facts on vector and Banach lattices.
A vector subspace $Y$ of a vector lattice $X$ is called a \term{sublattice} if
$\abs{y}\in Y$ whenever $y\in Y$; in this case, $y_1\wedge y_2$ and $y_1\vee
y_2$ both belong to $ Y$ whenever $y_1,y_2\in Y$. A subspace $Y$ is called an
\term{ideal} (or a \term{solid subspace}) of $X$, if $\abs{x}\leq \abs{y}$ and
$y\in Y$ imply $x\in Y$.
A linear functional $\phi$ on a vector lattice $X$ is said to be \term{order
continuous} if $\phi(x_\alpha)\rightarrow 0$ whenever $x_\alpha\xrightarrow{o}0$
in $X$. The collection of all order continuous linear functionals on $X$ is
denoted by $X_n^\sim$ and is called the \term{order continuous dual} of $X$. A
linear functional $\phi$ on $X$ is said to be \term{positive} if $\phi(x)\geq 0$
whenever $x\geq 0$, and is said to be \term{strictly positive} if $\phi(x)>0$
whenever $x>0$.

The following lemma will be used. Recall first that a vector lattice is said to
be \term{order complete} (or \term{Dedekind complete}) if every order bounded
above subset has a supremum, and is said to have the \term{countable sup
property} if any subset having a supremum possesses a countable subset with the
same supremum. A subset $A$ of a vector lattice $X$ is said to be \term{order
closed} if $x\in A$ whenever there exists a net $(a_\alpha)$ in $A$ such that
$a_\alpha\xrightarrow{o}x$ in $X$.

\begin{lemma}\label{o-closed}
Let $X$ be an order complete vector lattice with the countable sup property and
$Y$ be a sublattice of $X$. Then $Y$ is order closed in $X $ iff for any
increasing sequence in $Y$ which is order bounded above in $X$, its supremum in
$X$ also lies in $Y$.
\end{lemma}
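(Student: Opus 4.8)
The plan is to prove the two implications separately; the forward direction is immediate, and the content lies in the converse. If $Y$ is order closed and $(y_n)\subseteq Y$ is increasing and order bounded above in $X$, then order completeness of $X$ gives $y:=\sup_n y_n\in X$; since $y_n\uparrow y$ implies $y_n\xrightarrow{o}y$, order-closedness of $Y$ forces $y\in Y$.

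For the converse, suppose the sequential condition holds and let $(y_\alpha)$ be a net in $Y$ with $y_\alpha\xrightarrow{o}x$ in $X$; we must show $x\in Y$. First I would replace $(y_\alpha)$ by a tail on which it is order bounded, say $\abs{y_\alpha}\le u$ for all $\alpha$ and some $u\in X_+$: this is legitimate because order convergence forces eventual order boundedness, a tail is cofinal and still order converges to $x$, and the limit $x$ is unchanged. Since $X$ is order complete, $p_\alpha:=\sup_{\beta\ge\alpha}y_\beta$ exists for every $\alpha$ and the net $(p_\alpha)$ is decreasing. Using a dominating net $w_\gamma\downarrow 0$ from the definition of $y_\alpha\xrightarrow{o}x$, one checks that for each $\gamma$ one has $p_\alpha\le x+w_\gamma$ and $p_\alpha\ge x-w_\gamma$ for all sufficiently large $\alpha$, whence $\inf_\alpha p_\alpha=x$; thus $p_\alpha\downarrow x$, and in particular $p_\alpha\ge x$ for every $\alpha$.

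The crux is to pass from the hypothesis, which controls only increasing sequences, first to the possibly uncountable suprema $p_\alpha$ and then to the decreasing net $(p_\alpha)$, and this is exactly where the countable sup property enters, twice. Since $p_\alpha=\sup\{y_\beta:\beta\ge\alpha\}$ exists, the countable sup property provides indices $\beta_1,\beta_2,\dots\ge\alpha$ with $\sup_n y_{\beta_n}=p_\alpha$; the finite joins $s_n:=y_{\beta_1}\vee\cdots\vee y_{\beta_n}$ lie in $Y$ because $Y$ is a sublattice, they increase, and they are order bounded above, so the hypothesis yields $p_\alpha\in Y$. Next, $\{-p_\alpha:\alpha\}\subseteq Y$ (as $Y$ is a subspace) has supremum $-\inf_\alpha p_\alpha=-x$ in $X$; a second application of the countable sup property gives indices $\alpha_1,\alpha_2,\dots$ with $\sup_n(-p_{\alpha_n})=-x$, and the finite joins $(-p_{\alpha_1})\vee\cdots\vee(-p_{\alpha_n})$ form an increasing sequence in $Y$ which is order bounded above (by $-x$, since each $p_{\alpha_n}\ge x$) and has supremum $-x$. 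The hypothesis then gives $-x\in Y$, hence $x\in Y$.

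I expect the main obstacle to be purely the net-theoretic bookkeeping: justifying the reduction to an order-bounded net, verifying $p_\alpha\downarrow x$ carefully from whichever form of the definition of order convergence one adopts, and making sure that each set to which the countable sup property is applied genuinely possesses a supremum in $X$ (which it does, by order completeness together with the bound $u$). Beyond that, the argument is a routine exploitation of the sublattice and subspace structure of $Y$.
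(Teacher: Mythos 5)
Your proof is correct, and it reaches the conclusion by a slightly different arrangement of the same tools. The paper's proof first isolates the observation that the supremum and infimum (in $X$) of any order bounded \emph{sequence} in $Y$ lie in $Y$ (via finite joins and the hypothesis), then reduces the net to a sequence: after passing to an order bounded tail it applies the countable sup property once to $\inf_\alpha\sup_{\beta\geq\alpha}\abs{y_\beta-x}=0$ to extract indices $\alpha_n$ with $y_{\alpha_n}\xrightarrow{o}x$, writes $x=\inf_n\sup_{m\geq n}y_{\alpha_m}$, and finishes with the sequential observation. You instead stay with the net, show that each tail supremum $p_\alpha=\sup_{\beta\geq\alpha}y_\beta$ lies in $Y$ (countable sup property plus finite joins plus the hypothesis, applied for every $\alpha$), prove $p_\alpha\downarrow x$, and then apply the countable sup property a second time to the decreasing net $(p_\alpha)$, i.e.\ to $\{-p_\alpha\}$ with supremum $-x$, to conclude $-x\in Y$ and hence $x\in Y$. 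Both routes rest on the same three ingredients (order completeness for existence of the suprema, countable sup property to pass from uncountable to countable families, the sublattice structure to form increasing finite joins); the paper's sequence-extraction step is a bit more economical since it invokes the countable sup property only once and never needs $p_\alpha\in Y$ for all $\alpha$, whereas your version avoids stating the auxiliary sequential lemma separately at the cost of redoing that argument at two places. Your bookkeeping for the reduction to an order bounded tail and for $p_\alpha\downarrow x$ (hence $p_\alpha\geq x$, which is what makes $-x$ an upper bound in the last step) is sound, so there is no gap.
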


Given a probability space $(\Omega,\Sigma,\mathbb{P})$, denote by $L_0(\Sigma)$
the space of all real-valued measurable functions (\emph{modulo
a.e.~equality}). 
We use $\one$ to denote the constant one function.
Recall that $L_0(\Sigma)$ is a vector lattice, endowed with the natural order:
$f\leq g$ iff $f(\omega)\leq g(\omega)$ for a.e.~$\omega\in \Omega$.
By \cite[Lemma~2.6.1]{Meyer-Nieberg:91}, it is easily seen that any ideal of
$L_0(\Sigma)$ is order complete and has the countable sup property. Hence,
Lemma~\ref{o-closed} is applicable to them. Recall also that
$f_n\xrightarrow{o}0$ in an ideal $X$ of $L_0(\Sigma)$ if and only if
$f_n\xrightarrow{a.e.}0$ and $(f_n)_{n=1}^\infty$ is \emph{order bounded} in
$X$, i.e., there exists $f\in X$ such that $\abs{f_n}\leq f$ a.e.~for each
$n\geq 1$.
We remark that the class of ideals of $L_0(\Sigma)$ which admit strictly
positive order continuous linear functionals is very large. For example, by
\cite[Proposition~5.19]{GAOTX:15}, all Banach function spaces (i.e., ideals of
$L_0(\Sigma)$ endowed with complete lattice norm), including all Orlicz spaces,
are as
such.

For a subset $Y$ of $L_0(\Sigma)$, define $\sigma(Y)$ to be the smallest
sub-$\sigma$-algebra of $\Sigma$ which makes all members in $Y$ measurable and
contains all $\mathbb{P}$-null sets.  
Denote by $L_0(\sigma(Y))$ the set of all functions in $L_0(\Sigma)$ which are
measurable with respect to $\sigma(Y)$.
Clearly, $Y\subset L_0(\sigma(Y))$.
If $Y=\{f\}$, we write $\sigma(f)$ instead of $\sigma(\{f\})$, for the sake of
simplicity.
The following result is
an improved and generalized market completeness theorem in the sense of Green
and Jarrow (\cite[Theorem~1]{GREEN:87}).

\begin{lemma}\label{GJ-comp}
Let $X$ be an ideal of $L_0(\Sigma)$ and $Y$ be a sublattice of $X$ such that
$\mathbb{1} \in Y$. Then the following are equivalent:
\begin{enumerate}
\item[(a)] $Y$ is order closed in $X$,
\item[(b)] $Y=L_0(\sigma(Y)) \cap X$.
\end{enumerate}
\end{lemma}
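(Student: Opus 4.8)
The plan is to establish both implications, with the bulk of the work lying in (a)$\Rightarrow$(b). The direction (b)$\Rightarrow$(a) should be the easy half: $L_0(\sigma(Y))\cap X$ is an ideal of $X$, hence automatically order closed in $X$ (if a net in an ideal order-converges in $X$, the limit is dominated by a tail bound and is a.e.-limit of $\sigma(Y)$-measurable functions, so it stays in $L_0(\sigma(Y))\cap X$). Thus if $Y$ equals this ideal it is order closed, and we are done with that direction.

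For (a)$\Rightarrow$(b), first I would record the trivial inclusion $Y\subset L_0(\sigma(Y))\cap X$, which holds for any $Y$ by definition of $\sigma(Y)$. The real content is the reverse inclusion. I would reduce it, via Lemma~\ref{o-closed} applied to the order complete, countable-sup-property space $X$, to the following: every increasing sequence in $Y$ that is order bounded above in $X$ has its $X$-supremum back in $Y$ — and this is exactly what (a) gives us. So the strategy is to take an arbitrary $g\in L_0(\sigma(Y))\cap X$ and exhibit it as the order limit in $X$ of elements of $Y$, using that $Y$ is a sublattice containing $\one$. The standard device: approximate $g$ first by bounded $\sigma(Y)$-measurable functions (truncations $g\wedge n\one$, which are order bounded by $|g|$), then approximate a bounded $\sigma(Y)$-measurable function by simple functions $\sum c_i\one_{A_i}$ with $A_i\in\sigma(Y)$; the key point becomes showing $\one_A\in\overline{Y}^{\,o}=Y$ for every $A\in\sigma(Y)$.

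To handle $\one_A$ for $A\in\sigma(Y)$, I would use a generating/monotone-class argument: the collection $\mathcal{D}=\{A\in\Sigma:\one_A\in Y\text{ or }\one_A\text{ is an order limit in }X\text{ of elements of }Y\}$ should be shown to be a $\sigma$-algebra (or a $\lambda$-system) containing a generating $\pi$-system for $\sigma(Y)$. The generating sets are those of the form $\{h>t\}$ or $\{h\geq t\}$ for $h\in Y$, $t\in\mathbb{R}$; these indicators arise as order limits of functions like $(n(h-t\one)^+)\wedge\one\in Y$ (using that $Y$ is a sublattice containing $\one$), and these are order bounded above by $\one\in X$, so their $X$-supremum lies in $Y$ by (a). Closure under complements uses $\one_{A^c}=\one-\one_A$, and closure under countable increasing unions again uses order boundedness by $\one$ plus (a). The main obstacle I anticipate is the bookkeeping in this monotone-class step — specifically, being careful that every supremum invoked is genuinely order bounded \emph{in $X$} (not merely pointwise), so that Lemma~\ref{o-closed} legitimately applies; since all the relevant indicators are dominated by $\one\in X$ and, in the final truncation step, $g\wedge n\one$ is dominated by $|g|\in X$, this works, but it must be threaded through each approximation layer. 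Once $\one_A\in Y$ for all $A\in\sigma(Y)$, linearity gives all $\sigma(Y)$-measurable simple functions in $Y$, monotone approximation plus (a) gives all bounded ones, and a final truncation-and-supremum argument gives the arbitrary $g\in L_0(\sigma(Y))\cap X$, completing the reverse inclusion.
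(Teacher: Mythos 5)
Your proof of (a)$\Rightarrow$(b) is essentially the paper's argument: you show that the collection of sets whose indicators lie in $Y$ (your $\mathcal{D}$, the paper's $\mathcal{G}$) is a $\sigma$-algebra containing the generating sets $\{h>t\}$, $h\in Y$, via $n(h-t\one)^+\wedge\one\uparrow\chi_{\{h>t\}}$ together with order closedness, hence contains $\sigma(Y)$; then simple functions and monotone limits finish the reverse inclusion. The paper does exactly this, only skipping your intermediate truncation layer by approximating $g^{\pm}$ directly with increasing simple functions, which is cosmetic.

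One correction in the easy direction (b)$\Rightarrow$(a): $L_0(\sigma(Y))\cap X$ is \emph{not} an ideal of $X$ (a function dominated in modulus by a $\sigma(Y)$-measurable one need not be $\sigma(Y)$-measurable), and in any case ideals are not automatically order closed --- order closed ideals are precisely bands, and e.g.\ $L_\infty(\Sigma)$ is an ideal of $L_1(\Sigma)$ that is not order closed in it. So the headline claim ``ideal, hence order closed'' is wrong on both counts. Fortunately, your parenthetical reasoning is the correct argument and is what the paper uses: by Lemma~\ref{o-closed} it suffices to check increasing sequences in $Y$ that are order bounded above in $X$; their suprema in $X$ are a.e.\ limits, and a.e.\ limits of $\sigma(Y)$-measurable functions are again $\sigma(Y)$-measurable, hence remain in $L_0(\sigma(Y))\cap X=Y$. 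With that repair, your proposal matches the paper's proof.
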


\section{Main results}

In this section, the space of contingent claims, $X$, is always modelled as an
ideal of
$L_0(\Sigma)$ over a given probability space $(\Omega,\Sigma,\mathbb{P})$ that
contains the constant functions and admits a strictly positive order
continuous linear functional. 
Our first main result is as follows.

\begin{theorem}\label{main}
Let $f$ be a limited liability claim in $X$. For
a claim $g\in X$, the following are equivalent:
\begin{itemize}
\item[(a)] $g$ is a contingent claim written on $f$, i.e., $g\in 
L_0(\sigma(f)) \cap X$,
\item[(b)] $g$ can be approximated by portfolios of options on $f$ in the
${\sigma(X,X_n^\sim)}$-topology, i.e., $g\in
\overline{O_{f}}^{{\sigma(X,X_n^\sim)}}$,
\item[(c)] There exists a sequence $(g_n)$ in $O_{f}$ such that $g_n
\xrightarrow{{a.e.}} g$.
\end{itemize}
\end{theorem}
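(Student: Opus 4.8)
The plan is to base everything on the structural observation that $O_f$ consists precisely of the functions $\psi\circ f$ with $\psi\colon\mathbb{R}\to\mathbb{R}$ continuous, piecewise-affine with finitely many pieces, and constant on some half-line $(-\infty,a]$: indeed $c_0\one+\sum_i c_i(f-k_i)^+=\psi(f)$ for such a $\psi$, and conversely any such $\psi$ satisfies $\psi(t)=\psi(a)+\sum_j\delta_j(t-k_j)^+$, where $\delta_j$ is the slope increment at the breakpoint $k_j$. Since $\abs{\psi}$ is again of this type and $\abs{\psi(t)}\le A+B\abs{t}$, it follows that $O_f$ is a sublattice of $X$ with $\one,f\in O_f$ and $O_f\subseteq L_0(\sigma(f))\cap X$. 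The implication (c)$\Rightarrow$(a) is then immediate, since an a.e.\ limit of $\sigma(f)$-measurable functions is $\sigma(f)$-measurable and $g\in X$ by hypothesis. It remains to prove (a)$\Rightarrow$(c), (a)$\Rightarrow$(b) and (b)$\Rightarrow$(a).

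For (a)$\Rightarrow$(c), write $g=h(f)$ with $h$ Borel (Doob--Dynkin) and let $\mu$ be the law of $f$, a Borel probability measure on $[0,\infty)$. I would establish the approximation lemma: for every Borel $h$ and $\varepsilon,\delta>0$ there is a $\psi$ of the above type with $\mathbb{P}\big(\abs{\psi(f)-h(f)}>\varepsilon\big)<\delta$, and moreover $\norm{\psi}_\infty\le\norm{h}_\infty$ when $h$ is bounded. This is routine: truncate $h$ to a bounded Borel function off a set of small $\mu$-measure; use Lusin's theorem to replace it by a bounded continuous $u$ agreeing with it off a small set; linearly interpolate $u$ on a fine mesh of a large interval $[0,N]$ (uniform continuity); extend affinely, indeed by constants, outside $[0,N]$ and on $(-\infty,0]$; and finally truncate to $[-\norm{u}_\infty,\norm{u}_\infty]$. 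Applying this with $\varepsilon=\delta=1/n$ gives $\psi_n(f)\to g$ in probability, and a subsequence converges a.e., which is (c).

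For (a)$\Rightarrow$(b), note first that order-convergent nets are $\sigma(X,X_n^\sim)$-convergent (functionals in $X_n^\sim$ are order continuous by definition), so $\overline{O_f}^{\,\sigma(X,X_n^\sim)}$ is order closed. Given $g=h(f)\in L_0(\sigma(f))\cap X$, put $h_m=(-m\vee h)\wedge m$; then $h_m(f)\to g$ a.e.\ with $\abs{h_m(f)}\le\abs{g}\in X$, so $h_m(f)\xrightarrow{o}g$ in $X$. For fixed $m$, the ``moreover'' part of the approximation lemma gives $g_n\in O_f$ with $g_n\to h_m(f)$ a.e.\ and $\abs{g_n}\le m\one\in X$, hence $g_n\xrightarrow{o}h_m(f)$, hence $g_n\to h_m(f)$ in $\sigma(X,X_n^\sim)$, so $h_m(f)\in\overline{O_f}^{\,\sigma(X,X_n^\sim)}$; since this set is order closed and $h_m(f)\xrightarrow{o}g$, we conclude $g\in\overline{O_f}^{\,\sigma(X,X_n^\sim)}$.

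The step I expect to be the main obstacle is (b)$\Rightarrow$(a), and it is exactly here that the strictly positive order continuous functional $\phi_0$ is used. Writing $\phi_0(\cdot)=\int\cdot\,w\,\mathrm{d}\mathbb{P}$ with $w>0$ a.e.\ (by strict positivity), normalised so that $\mathrm{d}\mathbb{Q}=w\,\mathrm{d}\mathbb{P}$ defines a probability measure equivalent to $\mathbb{P}$, one has $X\subseteq L_1(\mathbb{Q})$ with $\norm{x}_{L_1(\mathbb{Q})}=\phi_0(\abs{x})$, and for each $u\in L_\infty(\mathbb{Q})$ the functional $\Lambda_u\colon x\mapsto\int xu\,\mathrm{d}\mathbb{Q}$ satisfies $\abs{\Lambda_u(x)}\le\norm{u}_\infty\phi_0(\abs{x})$, so $\Lambda_u$ is order continuous, i.e.\ $\Lambda_u\in X_n^\sim$. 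Hence $\sigma(X,X_n^\sim)$ is finer than the trace on $X$ of $\sigma\big(L_1(\mathbb{Q}),L_\infty(\mathbb{Q})\big)$, so $\overline{O_f}^{\,\sigma(X,X_n^\sim)}$ is contained in the $L_1(\mathbb{Q})$-weak closure of $O_f$, intersected with $X$. But $O_f$ lies in the subspace of $\sigma(f)$-measurable $\mathbb{Q}$-integrable functions, which, being the $L_1$-space of a sub-$\sigma$-algebra, is norm closed --- hence weakly closed --- in $L_1(\mathbb{Q})$; therefore $\overline{O_f}^{\,\sigma(X,X_n^\sim)}\subseteq L_0(\sigma(f))\cap X$, which is (a). (As a variant, Lemma~\ref{GJ-comp} applied to $L_0(\sigma(f))\cap X$, whose generated $\sigma$-algebra is $\sigma(f)$, together with Lemma~\ref{o-closed}, records that $L_0(\sigma(f))\cap X$ is an order closed sublattice containing $\one$; but it is the embedding into $L_1(\mathbb{Q})$ that converts $\sigma(X,X_n^\sim)$-closedness into a classical weak-closedness statement.)
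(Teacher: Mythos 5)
Your proposal is correct, but it takes a genuinely different route from the paper's. Where the paper gets the sublattice property of $O_f$ by citing Brown--Ross \cite{BROWN:91}, you get it from the explicit identification of $O_f$ as $\{\psi(f)\}$ with $\psi$ continuous, piecewise affine and constant on a left half-line, and this same calculus powers your direct, measure-theoretic proof of (a)$\Rightarrow$(c) via Lusin's theorem and piecewise-linear interpolation; the paper instead derives (a)$\Leftrightarrow$(b) abstractly from its order-closedness machinery (Lemmas~\ref{o-closed} and~\ref{GJ-comp}, a Green--Jarrow type result) and reaches (c) only through (b). For the closure direction, the paper proves (b)$\Rightarrow$(c) directly, using Mazur's theorem for the absolute weak topology $\abs{\sigma}(X,X_n^\sim)$ and the strictly positive $x_0^*\in X_n^\sim$ to extract an a.e.-convergent sequence from a net; you instead prove (b)$\Rightarrow$(a) by representing that functional by a density (Radon--Nikodym, as the introduction notes), embedding $X$ into $L_1(\mathbb{Q})$ for an equivalent probability $\mathbb{Q}$, noting each $u\in L_\infty(\mathbb{Q})$ induces an element of $X_n^\sim$, and using that the $\sigma(f)$-measurable integrable functions form a norm-closed, hence weakly closed, subspace of $L_1(\mathbb{Q})$. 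Your implication graph (a)$\Rightarrow$(b), (a)$\Rightarrow$(c), (b)$\Rightarrow$(a), (c)$\Rightarrow$(a) closes up, and each step checks out; the approximation lemma you call routine is indeed standard. What each buys: your argument is more elementary and self-contained, avoiding Lemma~\ref{GJ-comp} and the Brown--Ross citation, and it converts the abstract $\sigma(X,X_n^\sim)$-closure into a classical $L_1$-weak closedness statement; the paper's lattice-theoretic route stays intrinsic to $X$, gives the structural byproduct that $\overline{O_f}^{\sigma(X,X_n^\sim)}$ is an order closed sublattice equal to $L_0(\sigma(f))\cap X$, proves (b)$\Rightarrow$(c) without passing through (a), and develops Lemma~\ref{GJ-comp}, which is of independent interest elsewhere in the paper.
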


The following corollary is immediate.

\begin{corollary}\label{main2}
Let $f$ be a limited liability claim in $X$ such that $\sigma(f)=\Sigma$.
Then we have the following:
\begin{itemize}
\item[(a)] The option space of $f$ completes the market in the
$\sigma(X,X_n^\sim)$-topology,
i.e., $\overline{O_{f}}^{{\sigma(X,X_n^\sim)}}=X$,
\item[(b)] The option space of $f$ completes the market by approximating via
a.e.~convergence, i.e., for any $g\in X$, there exists a sequence $(g_n)$ in
$O_{f}$ such that $g_n \xrightarrow{{a.e.}} g$.
\end{itemize}
\end{corollary}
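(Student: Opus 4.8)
The plan is to obtain both conclusions directly by specializing Theorem~\ref{main} to the hypothesis $\sigma(f)=\Sigma$. The pivotal observation is that this hypothesis collapses the set of contingent claims written on $f$ onto the entire space of claims: since $\sigma(f)=\Sigma$ forces $L_0(\sigma(f))=L_0(\Sigma)$, and since $X$ is an ideal of $L_0(\Sigma)$ and therefore $X\subseteq L_0(\Sigma)$, we get $L_0(\sigma(f))\cap X=L_0(\Sigma)\cap X=X$. Thus condition (a) of Theorem~\ref{main} is satisfied by \emph{every} $g\in X$.

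With this in hand, I would simply read off the two conclusions from the equivalences (a)$\Leftrightarrow$(b)$\Leftrightarrow$(c) of Theorem~\ref{main}, applied to an arbitrary $g\in X$. For part (b) of the corollary, condition (c) of the theorem supplies, for each $g\in X$, a sequence $(g_n)$ in $O_f$ with $g_n\xrightarrow{a.e.}g$, which is precisely the assertion. For part (a) of the corollary, condition (b) of the theorem gives $g\in\overline{O_f}^{\sigma(X,X_n^\sim)}$ for every $g\in X$, yielding the inclusion $X\subseteq\overline{O_f}^{\sigma(X,X_n^\sim)}$; the reverse inclusion is automatic, since $\sigma(X,X_n^\sim)$ is a topology on $X$ and hence the closure of any subset of $X$ remains inside $X$. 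Combining the two inclusions gives $\overline{O_f}^{\sigma(X,X_n^\sim)}=X$.

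The only point that warrants even momentary attention is the set-theoretic identity $L_0(\sigma(f))\cap X=X$ under the hypothesis $\sigma(f)=\Sigma$; once this is recorded, each conclusion is a verbatim transcription of one implication of Theorem~\ref{main}, so I anticipate no genuine obstacle.
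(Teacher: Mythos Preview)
Your proposal is correct and matches the paper's approach exactly: the paper simply declares the corollary ``immediate'' from Theorem~\ref{main}, and your argument spells out precisely why---namely, that $\sigma(f)=\Sigma$ forces $L_0(\sigma(f))\cap X=X$, after which both conclusions are read off from the equivalences in Theorem~\ref{main}.
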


\begin{remark}\label{rem}
Note that our Theorem~\ref{main} and Corollary~\ref{main2} imply both the
aforementioned results of Nachman,
Galvani and Troitsky. Indeed, recall first that, for $1\leq p<\infty$,
$X=L_p(\Sigma)$ is \term{order continuous}, i.e., $x_\alpha\downarrow0$ in $X$
implies $\norm{x_\alpha}\downarrow 0$. In this case, one has $X_n^\sim=X^*$, so
that $\sigma(X,X_n^\sim)$ is just the weak topology on $X$. Thus, by Mazur's
theorem,
$\overline{C}^{\sigma(X,X_n^\sim)}=\overline{C}^w=\overline{C}^{\norm{\cdot}}$
for any convex subset $C$ of $X$. Consequently, it follows that
$$\overline{O_{f}}^{{\sigma(X,X_n^\sim)}}=\overline{O_{f}}^{\norm{\cdot}_p}
.$$
Now it is clear that Equation (1) follows from Theorem~\ref{main}. If, in
addition, $\Omega $ is a Polish space with $\Sigma$ being the Borel algebra, and
$f$ is one-to-one, then it is easily seen that $\sigma(f)=\Sigma$ by
\cite[Theorem~12.29]{ALIP:07}.
Thus, Equation (2) follows from Corollary~\ref{main2}. Equation (3) also follows
from Corollary~\ref{main2}, since $L_\infty(\Sigma)_n^\sim=L_1(\Sigma)$ and thus
$\sigma(L_\infty(\Sigma),L_\infty(\Sigma)_n^\sim)$ is just the $w^*$-topology.
\end{remark}

We now turn to discuss the pricing problem. Our notation and terminology are in
accordance with \cite{Kreps81,Schach02}.

Let $f$ be a fixed asset in $X$ and $\pi$ be a positive linear functional on the
option space $M:=O_{f}$, which is interpreted as a linear pricing functional on
$M$. We denote by $M_0:=\{x \in M \,\, | \,\, \pi(x)=0\}$ the set of all
portfolios of options on $f$ that can be bought or sold with zero price. We
say that $(M,\pi)$ admits  \textbf{no free lunches} (cf.~\cite[Definition
1.3]{Schach02}), if the following holds
$$C\cap X_+=\{0\}, \text{ where } C=\overline{M_0-X_+}^{\sigma(X,X_n^\sim)}.$$

We say that a price, $p$, of an asset $g \in X$ is \textbf{consistent} with
$(M,\pi)$ if there exists  a strictly positive functional $x^*\in X_n^\sim$ such
that $x^*|_M=\pi$ and $x^*(g)=p$ (\cite[Definition, pp.~29]{Kreps81}). The
price of $g \in X$ is said to be \textbf{determined by arbitrage from
$(M,\pi)$} if there is a single price $p$ for $g$ that is consistent with
$(M,\pi)$  (\cite[Definition, pp.~30]{Kreps81}).

\begin{theorem}\label{arbitrage}
Suppose that the space $X$ of contingent claims is a Banach function space in
$L_0(\Sigma)$. Let $f$ be a limited liability asset in $X$ and $\pi$ be a
positive linear functional on the option space $M=O_{f}$. If $(M,\pi)$ admits
no free lunches, then the price of any contingent claim $g \in  L_0(\sigma(f))
\cap X$ is determined by arbitrage from
$(M,\pi)$.
\end{theorem}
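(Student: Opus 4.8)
The plan is to prove two statements: (i) \emph{existence}, that there is at least one price for $g$ consistent with $(M,\pi)$; and (ii) \emph{uniqueness}, that any two consistent prices for $g$ coincide. Part (ii) will follow almost immediately from Theorem~\ref{main}. Part (i) is the substance: it amounts to a Kreps--Yan type separation theorem for the locally convex space $(X,\sigma(X,X_n^\sim))$, and it is here that the Banach function space hypothesis is needed (it gives a complete lattice norm, hence boundedness of positive functionals, and a strictly positive order continuous functional).

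\emph{Existence.} Put $C=\overline{M_0-X_+}^{\sigma(X,X_n^\sim)}$. Since $M_0$ is a subspace and $X_+$ a cone, $M_0-X_+$ is a convex cone containing $0$, $-X_+$ and $M_0$, and so is its $\sigma(X,X_n^\sim)$-closure $C$; in particular $M_0\subseteq C$. Because $X$ is a Banach function space it carries a strictly positive order continuous functional (see the remark after Lemma~\ref{GJ-comp}), so $X_n^\sim$ separates the points of $X$ and is exactly the dual of $(X,\sigma(X,X_n^\sim))$. Fix $g_0\in X_+\setminus\{0\}$; by the no free lunch hypothesis $g_0\notin C$, so the Hahn--Banach separation theorem produces $\phi\in X_n^\sim$ with $\phi(g_0)>0$ and $\phi$ bounded above on $C$; since $C$ is a cone this forces $\phi\le 0$ on $C$, and since $-X_+\subseteq C$ we get $\phi\ge 0$ (and $\phi$ is automatically norm bounded, positive functionals on a Banach lattice being continuous). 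Thus the family of positive $\phi\in X_n^\sim$ of dual norm $\le 1$ with $\phi\le 0$ on $C$ separates $X_+\setminus\{0\}$. A standard exhaustion argument—using that $X$ has the countable sup property, that a positive order continuous functional has a carrier band, and that $X_n^\sim$ is a norm-closed band in $X^\sim$—then selects countably many such functionals whose weighted sum $\psi:=\sum_n 2^{-n}\phi_n$ is strictly positive; moreover $\psi\in X_n^\sim$ and $\psi\le 0$ on $C$, hence $\psi=0$ on the subspace $M_0$. Now $\one\notin M_0$, for otherwise $\one\in C\cap X_+$, contradicting no free lunch; hence $\pi(\one)>0$ and $M_0$ has codimension one in $M$, so $\psi|_M=\lambda\pi$ with $\lambda=\psi(\one)/\pi(\one)>0$. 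Setting $x^\ast:=\lambda^{-1}\psi$, we obtain a strictly positive $x^\ast\in X_n^\sim$ with $x^\ast|_M=\pi$, so $p:=x^\ast(g)$ is a price for $g$ consistent with $(M,\pi)$.

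\emph{Uniqueness.} Let $y^\ast\in X_n^\sim$ be any strictly positive functional with $y^\ast|_M=\pi$. Then $x^\ast-y^\ast\in X_n^\sim$ vanishes on $M=O_f$, and being $\sigma(X,X_n^\sim)$-continuous it vanishes on $\overline{O_f}^{\sigma(X,X_n^\sim)}$, which equals $L_0(\sigma(f))\cap X$ by the equivalence (a)$\Leftrightarrow$(b) of Theorem~\ref{main}. Since $g\in L_0(\sigma(f))\cap X$, this gives $y^\ast(g)=x^\ast(g)=p$. Hence $p$ is the unique price for $g$ consistent with $(M,\pi)$, i.e.\ the price of $g$ is determined by arbitrage from $(M,\pi)$.

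The main obstacle is the existence step, that is, the Kreps--Yan argument: passing from the Hahn--Banach separators (one positive order continuous functional per nonzero positive claim) to a \emph{single} strictly positive order continuous functional annihilating $C$. This is precisely where completeness of the lattice norm (needed to keep the weighted sum in $X^\ast$ and in the band $X_n^\sim$), the existence of a strictly positive order continuous functional, and the countable sup property of $X$ all enter, and it explains why the theorem is stated for Banach function spaces rather than for arbitrary ideals as in Theorem~\ref{main}.
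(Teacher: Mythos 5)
Your proof is correct and follows essentially the same route as the paper: the no free lunch condition makes $C=\overline{M_0-X_+}^{\sigma(X,X_n^\sim)}$ a $\sigma(X,X_n^\sim)$-closed cone containing $-X_+$ with $C\cap X_+=\{0\}$, a Kreps--Yan type theorem for $\big(X,\sigma(X,X_n^\sim)\big)$ yields a strictly positive $x^*\in X_n^\sim$ with $x^*|_C\leq 0$, hence $x^*|_{M_0}=0$ and $x^*|_M$ is a positive multiple of $\pi$, and uniqueness of the consistent price follows because any two strictly positive order continuous extensions of $\pi$ agree on $\overline{O_f}^{\sigma(X,X_n^\sim)}=L_0(\sigma(f))\cap X$ by Theorem~\ref{main}. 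The one point where you diverge is the Kreps--Yan step itself: the paper isolates it as Proposition~\ref{Kreps-Yan} and proves it by verifying the assumptions (C) and (L) of a theorem of Jouini--Napp--Schachermayer, whereas you sketch the classical argument directly (Hahn--Banach separation of each $g_0\in X_+\setminus\{0\}$ from the cone, then an exhaustion producing a countable subfamily whose weighted sum $\sum_n 2^{-n}\phi_n$ is strictly positive, lies in $X_n^\sim$ because $X_n^\sim$ is a norm-closed band, and is $\leq 0$ on $C$). Your sketch names exactly the ingredients the paper uses in its proof of the proposition (countable sup property, carrier/null bands of positive order continuous functionals, a weak unit implicitly, norm-closedness of the band $X_n^\sim$), so it is sound, though the ``standard exhaustion argument'' is the part that would need to be written out to be complete. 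On the other hand, you make explicit two details the paper compresses: that $\one\notin M_0$ under no free lunch, which forces $\pi(\one)>0$ and hence $\lambda>0$ in $\psi|_M=\lambda\pi$, and the uniqueness argument via Theorem~\ref{main}, which the paper leaves as ``it is easily seen.''
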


The proof of this result essentially depends on the following version of the
Kreps-Yan Theorem, which is of independent interest.

\begin{proposition}\label{Kreps-Yan}
Let $X$ be a Banach function space in $L_0(\Sigma)$. Then the Kreps-Yan theorem
holds true for $\big(X,\sigma(X,X^\sim_n)\big)$. That is, for each
$\sigma(X,X^\sim_n)$-closed cone $C$ in $X$ such that $C\supset -X_+ $ and $C
\cap X_+=\{0\}$, there exists a strictly positive functional $\phi \in X^\sim_n$
such that $\phi|_{C} \leq 0$.
\end{proposition}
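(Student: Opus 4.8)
The plan is to carry out the classical Kreps--Yan exhaustion scheme, with the countable sup property of ideals of $L_0(\Sigma)$ and the supply of strictly positive order continuous functionals on a Banach function space playing the role of the separability hypotheses usually imposed. Two preparatory remarks. First, since $X$ is a Banach function space it carries a strictly positive order continuous functional $\psi$, and for $0\ne w\in X$ the functional $x\mapsto\psi(\one_{\{w>0\}}x)$ (or $x\mapsto\psi(\one_{\{w<0\}}x)$) lies in $X^\sim_n$ and does not vanish at $w$; hence $X^\sim_n$ separates the points of $X$, so $\sigma(X,X^\sim_n)$ is a Hausdorff locally convex topology whose topological dual is exactly $X^\sim_n$. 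Second, $X^*=X^\sim$ and $X^\sim_n$ is a band in it, hence norm closed in $X^*$.

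\textbf{Separation step.} Fix $x\in X_+\setminus\{0\}$. Then $x\notin C$, since $C\cap X_+=\{0\}$; as $C$ is convex and $\sigma(X,X^\sim_n)$-closed, the Hahn--Banach separation theorem yields $\varphi\in X^\sim_n$ with $\varphi(x)>\sup_{c\in C}\varphi(c)$. Because $C$ is a cone the right-hand side equals $0$, so $\varphi|_C\le 0$; because $-X_+\subset C$ we get $\varphi\ge 0$; and after rescaling we may assume $\norm\varphi\le 1$. Put $\mathcal F=\{\varphi\in X^\sim_n:\varphi\ge 0,\ \norm\varphi\le 1,\ \varphi|_C\le 0\}$; we have just shown that every $x\in X_+\setminus\{0\}$ is detected by some $\varphi\in\mathcal F$.

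\textbf{Exhaustion step.} Each $\varphi\in\mathcal F$ is integration against a density $y_\varphi\ge 0$ (Radon--Nikodym, as in the Introduction), with carrier $D_\varphi=\{y_\varphi>0\}$. The essential supremum $D$ of the family $\{D_\varphi:\varphi\in\mathcal F\}$ must equal $\Omega$: otherwise $A=\Omega\setminus D$ has $\mathbb P(A)>0$, so $\one_A\in X_+\setminus\{0\}$ (here $\one\in X$ and $X$ is an ideal), and the Separation step gives $\varphi\in\mathcal F$ with $0<\varphi(\one_A)=\int_A y_\varphi\,\mathrm d\mathbb P$, forcing $D_\varphi\cap A$ to have positive measure and contradicting the choice of $D$. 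Since an essential supremum of a family of measurable sets is always attained along a countable subfamily --- the countable sup property of ideals of $L_0(\Sigma)$ recorded in the preliminary section --- there exist $\varphi_1,\varphi_2,\dots\in\mathcal F$ with $\bigcup_n D_{\varphi_n}=\Omega$ a.e.

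\textbf{Assembly and the main obstacle.} Put $\phi=\sum_{n\ge 1}2^{-n}\varphi_n$: the series converges in operator norm since $\norm{\varphi_n}\le 1$, so $\phi\in X^*$, $\phi\ge 0$, and $\phi(c)=\sum_n 2^{-n}\varphi_n(c)\le 0$ for $c\in C$; as $X^\sim_n$ is norm closed in $X^*$, $\phi\in X^\sim_n$. Its density $\sum_n 2^{-n}y_{\varphi_n}$ has support $\bigcup_n D_{\varphi_n}=\Omega$, so $\phi(x)>0$ whenever $x>0$; thus $\phi$ is the required strictly positive functional. The routine points are the Radon--Nikodym representation and the convergence of the series, and I expect the one genuine obstacle to be the Exhaustion step, which is exactly where ``Banach function space'' is used in an essential way rather than merely ``Banach lattice'': it requires both that $X^\sim_n$ be large enough to separate each $\one_A$ from the cone (ultimately, existence of strictly positive order continuous functionals) \emph{and} that the a priori uncountable family of carriers be reducible to a countable one (the countable sup property). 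If one prefers to avoid measure-theoretic language, the Exhaustion step can be recast via the carrier bands $B_\varphi=N_\varphi^{\perp}$, where $N_\varphi=\{x:\varphi(|x|)=0\}$: one shows that the band generated by $\{B_\varphi:\varphi\in\mathcal F\}$ is all of $X$, and then extracts a countable subfamily generating the same band, using the countable sup property together with the fact that $\one$ is a weak order unit of $X$.
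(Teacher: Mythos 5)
Your proof is correct, but it takes a genuinely different route from the paper. The paper does not reprove Kreps--Yan from scratch: it invokes the abstract version of \cite[Theorem~3.1]{Jouini05} for the pair $\big(X,\sigma(X,X_n^\sim)\big)$ and reduces the work to verifying two hypotheses, Assumption (C) (for any sequence in $X_n^\sim$ some strictly positive scalar combination converges in $\sigma(X_n^\sim,X)$, checked via the fact that $X_n^\sim$ is a norm-closed band in $X^\sim=X^*$) and Assumption (L) (a countable reducibility property of families in $(X_n^\sim)_+$, checked in purely lattice-theoretic terms via the null bands $N_\gamma$, their band decompositions, a weak unit $u$, and the countable sup property). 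Your argument instead runs the classical Yan-type exhaustion directly: Hahn--Banach separation of each $x\in X_+\setminus\{0\}$ from the $\sigma(X,X_n^\sim)$-closed convex cone $C$, Radon--Nikodym densities for positive order continuous functionals (exactly the representation the paper records in its Introduction, using $\one\in X$, the standing assumption of Section~3), an essential-supremum/carrier exhaustion reduced to a countable subfamily by the countable sup property, and a norm-convergent sum $\sum_n 2^{-n}\varphi_n$ that stays in $X_n^\sim$ because that band is norm closed in $X^*$. In effect you specialize and reprove the cited theorem: the same three ingredients appear in both proofs (norm closedness of $X_n^\sim$, a weak unit/constants, countable sup property), but your exhaustion step is measure-theoretic (densities and carriers) where the paper's verification of Assumption (L) is band-theoretic, and your version is self-contained at the cost of length, while the paper's is shorter and applies verbatim in settings where one prefers to avoid densities. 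Two small points to make explicit if you write this up: the cone $C$ is understood to be convex (as in \cite{Kreps81,Jouini05}, and as it is in the application to Theorem~\ref{arbitrage}), which is what your separation step needs; and the density representation of elements of $(X_n^\sim)_+$ uses $\one\in X$ (or a weak unit), so it deserves the one-line justification via order continuity and Radon--Nikodym rather than a bare citation of the Introduction.
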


The proof of this result (see Section~4) relies on \cite[Theorem~3.1]{Jouini05}.
For more results in this direction, we refer the reader to \cite{ROKH05,ROKH09}.
For no-arbitrage results, we refer the reader to the monograph \cite{Delbaen:06}
and the references therein.

\section{Proofs of Results}

\begin{proof}[Proof of Lemma~\ref{o-closed}]
Let $(y_n)$ be an increasing sequence in $Y$ that is order bounded above in $X$.
Since $X$ is order complete, it follows that $(y_n)$ has a supremum, $x$, in
$X$. Since $(y_n)$ is increasing, it follows that $y_n\uparrow x$ in $X$, so
that
$y_n\xrightarrow{o}x$ in $X$. Thus, if $Y$ is order closed in $X$, then $x\in
Y$. This proves the ``only if'' part.

For the ``if'' part, observe first that, in this case, for any sequence $(y_n)$
in $Y$ which is \emph{order bounded} in $X$, its supremum and infimum in $X$
also lie in $Y$. Indeed, denote by $x$ the supremum of $(y_n)$ in $X$. Put
$z_n=\bigvee_{k=1}^n y_k$. Then $z_n\in Y$ as $Y$ is a sublattice of $X$, and
moreover, the supremum of $(z_n)$ in $X$ is still $x$. Since $(z_n)$ is
increasing, it follows from the ``if'' assumption that $x\in Y$. Replacing
$(y_n)$ with $(-y_n)$, one sees easily that the infimum of $(y_n)$ in $X$ also
lies in $Y$.
Now let $(y_\alpha)\subset Y$ and $x\in X$ be such that
$y_\alpha\xrightarrow{o}x$ in $X$. By passing to a tail, we may assume that
$(y_\alpha)$ is order bounded in $X$. Then since $X$ is order complete, we have
$$\inf_\alpha\sup_{\beta\geq \alpha}\abs{y_\beta-x}=0,$$
where all the suprema and infima are taken in $X$. By the countable sup property
of $X$, we can choose $\{\alpha_n\}_{n=1}^\infty$ such that
$\inf_n\sup_{\beta\geq \alpha_n}\abs{y_\beta-x}=0$. Without loss of generality,
we can assume that $(\alpha_n)$ is increasing. It follows that
$$\inf_n\sup_{m\geq n}\abs{y_{\alpha_m}-x}=0,$$
or equivalently, $y_{\alpha_n}\xrightarrow{o}x$, so that $x=\inf_n\sup_{m\geq
n}y_{\alpha_m}$; cf.~\cite[Theorem 8.16]{ALIP:07}. Applying the preceding
observation to the suprema and then to the infimum, we obtain that $x\in Y$.
\end{proof}

\begin{proof}[Proof of Lemma~\ref{GJ-comp}]
Assume first (b) holds. Let $(f_n)$ be an increasing sequence in $Y$ and $f$
be its supremum in $X$. Then $f_n\uparrow f$ a.e. Since each $f_n$ is
$\sigma(Y)$-measurable, we have that $f$ is also $\sigma(Y)$-measurable, so that
$f\in L_0(\sigma(Y))\cap X=Y$. Thus since $X$ is order complete
and has the countable sup property, Lemma~\ref{o-closed} implies that (a) holds.

Conversely, assume that (a) holds. We first claim that
$\sigma(Y)=\big\{A\in\Sigma: \chi_A\in Y\big\}$\footnote{This is essentially
contained in \cite[Theorem~1]{GREEN:87}.}.
Denote the right hand side by $\mathcal{G}$. We first show that it is a
$\sigma$-algebra. Indeed, it is clear that $\emptyset \in \mathcal{G}$, and that
if $A\in\mathcal{G}$, then $\chi_{A^c}=\one-\chi_A\in Y$, so that
$A^c\in\mathcal{G}$. Now let $(A_k)_1^\infty$ be a sequence of sets in
$\mathcal{G}$. Then
$\chi_{\cup_{k=1}^nA_k}=\bigvee_{k=1}^n\chi_{A_k}\in Y$, and from
$\chi_{\cup_{k=1}^nA_k}\uparrow \chi_{\cup_{k=1}^\infty A_k}$ in $X$, it
follows that $\chi_{\cup_{k=1}^\infty A_k}\in Y$, since $Y$ is order closed.
Therefore, $\bigcup_{k=1}^\infty A_k\in \mathcal{G}$. This concludes the proof
of that $\mathcal{G}$ is a $\sigma$-algebra. Next, we show that each $f\in Y$ is
measurable with respect to $\mathcal{G}$. Indeed, for any real
number $r$, it follows from $Y\ni n(f-r)^+\wedge \one\uparrow
\chi_{\{f>r\}}$ in $X$ that $\chi_{\{f>r\}}\in Y$, so that $\{f>r\}\in
\mathcal{G}$, and $f$ is $\mathcal{G}$-measurable. Clearly, $\mathcal{G}$
contains all $\mathbb{P}$-null sets\footnote{Keep in mind that $\chi_A$ is
identified as $0$ in $L_0(\Sigma)$ if $\mathbb{P}(A)=0$.}. Thus we have
$\sigma(Y)\subset \mathcal{G}$. The reverse inclusion being clear, this
completes the proof of the claim.

It is clear that $Y\subset L_0(\sigma(Y))\cap X$. Now take $f\in
L_0(\sigma(Y))\cap X$. By considering
$f^\pm$, we may assume that $f$ is non-negative. Then we can find a sequence
$(f_n) $ of simple functions which are measurable with respect to $\sigma(Y)$
such that $f_n\uparrow f$ everywhere, so that $f_n\uparrow f$ in $X$. By the
preceding claim, we have that $f_n\in Y$. Therefore, $f\in Y$, and thus
$L_0(\sigma(Y))\cap X\subset Y$. It follows that $Y= L_0(\sigma(Y))\cap X$.
\end{proof}

\begin{proof}[Proof of Theorem~\ref{main}]
We claim that $O_{f}$ is a sublattice of $X$. Indeed, put $Z=\Span\big\{b, s,
(s-k b)^+:k\in\mathbb{R}\big\}$, where $b=f+\one$ and $s=f$. Note that, being an
ideal of $L_0(\Sigma)$, $X$
is order complete, and thus it is uniformly
complete (cf.~\cite[Lemma 1.56]{ALIP:03}). By \cite[Theorem~(1)]{BROWN:91}, it
follows that $Z$ is a sublattice of $X$. Now simply observe that $Z=O_{f}$.
Indeed, the inclusion $O_{f} \subset Z$ is immediate as $f,\one \in Z$ and
$Z$ is closed under lattice operations. For the reverse inclusion, note that
$s=f=(f-0)^+\in O_f$ so that $b\in O_f$ as well. Also, for
$k \geq 1$ we have $(s-kb)^+=0$, and for $k<1$ we have
$(s-kb)^+=(1-k)(f-\frac{k}{1-k}\one)^+ \in O_{f}$.

Assume that (c) holds. Since $Z$ is a sublattice of $X$, by considering the
positive and negative parts, respectively, we may assume that $g\geq 0$ and
$g_n\geq 0$ for all $n$. For any $k\geq 1$, since $g_k\wedge
g_n\xrightarrow{a.e.}g_k\wedge g$ and $(g_k\wedge g_n)_n$ is order bounded in
$X$, it follows that $g_k\wedge g_n\xrightarrow{o}g_k\wedge g$ in $X$, and
therefore, $g_k\wedge g_n\xrightarrow{\sigma(X,X_n^\sim)}g_k\wedge g$ as
$n\rightarrow\infty$.
By the fact that $Z$ is a sublattice again, we have $g_k\wedge g_n\in Z$ for all
$k,n\geq
1$.
Hence, $$g_k\wedge g\in \overline{Z}^{\sigma(X,X_n^\sim)}$$ for any $k\geq 1$.
Now $g_k\wedge g\xrightarrow{a.e.}g$ and $(g_k\wedge g)$ is order bounded in
$X$, we have $g_k\wedge g\xrightarrow{o}g$ in $X$, so that
$g_k\wedge g\xrightarrow{\sigma(X,X_n^\sim)}g$. Therefore, since
$\overline{Z}^{\sigma(X,X_n^\sim)}$ is ${\sigma(X,X_n^\sim)}$-closed, we have
$$g\in \overline{Z}^{\sigma(X,X_n^\sim)}.$$
This proves that (c)$\Rightarrow$(b).

Suppose now (b) holds. Recall that $X_n^\sim $ is a band (i.e., order closed
ideal) of the order dual $X^\sim$ (\cite[Theorem~1.57]{ALIP:06}). It follows
from
\cite[Theorem~3.50]{ALIP:06} that the
dual of $X$ under the topology $\abs{\sigma}(X,X_n^\sim)$ is just $X_n^\sim$.
Therefore, by Mazur's theorem (cf.~\cite[Theorem~3.13]{ALIP:06}), since $Z$ is
convex, $
\overline{Z}^{\sigma(X,X_n^\sim)}=\overline{Z}^{\abs{\sigma}(X,X_n^\sim)}$.
Consequently, there exists a net $(g_\alpha)$ in $Z$ such that
$g_\alpha\xrightarrow{\abs{\sigma}(X,X_n^\sim) }g$. In particular, if
$x^*_0$ is any strictly positive order continuous functional on $X$, then
$$x^*_0(\abs{g_\alpha-g})\rightarrow 0.$$
Take $(\alpha_n)$ such that
$x^*_0\big(\bigabs{g_{\alpha_n}-g}\big)\leq\frac{1}{2^n}$.
Then since $\bigvee_{m=n}^k\bigabs{g_{\alpha_m}-g}\wedge \one\uparrow_k
\sup_{m\geq n}\bigabs{g_{\alpha_m}-g}\wedge \one$, it follows from order
continuity of $x^*_0$ that
\begin{align*}
&x^*_0\big(\sup_{m\geq n}\abs{g_{\alpha_m}-g}\wedge \one\big)=
\lim_k x^*_0\big(\vee_{m=n}^k\bigabs{g_{\alpha_m}-g}\wedge \one\big)\\
\leq &\lim_kx^*_0\Big(\sum_{m=n}^k\bigabs{g_{\alpha_m}-g}\wedge \one\Big)
\leq \frac{1}{2^{n-1}}.
\end{align*}
Therefore, we have $$x^*_0\Big(\inf_{n\geq 1}\sup_{m\geq
n}\bigabs{g_{\alpha_m}-g}\wedge \one\Big)=0,$$
and thus by strict positivity of $x^*_0$, we have
$$\inf_{n\geq 1}\sup_{m\geq n}\bigabs{g_{\alpha_m}-g}\wedge \one=0.$$
If $g_{\alpha_n}\not\xrightarrow{a.e.}g$, then there exist $\varepsilon>0$ and a
measurable set $A$ of positive measure such that
$\limsup_n\abs{g_n(\omega)-g(\omega)}\geq \varepsilon$ for any $\omega\in A$.
Therefore, it is easily seen that $$\inf_{n\geq 1}\sup_{m\geq
n}\bigabs{g_{\alpha_n}-g}\wedge \one\geq (\varepsilon \chi_A)\wedge
\one>0.$$
This contradiction concludes the proof of (b)$\Rightarrow$(c).

Now put $Y=\overline{Z}^{{\sigma(X,X_n^\sim)}}$. Then $Y$ is clearly order
closed in $X$. Moreover, by the preceding paragraph,
$Y=\overline{Z}^{{\abs{\sigma}(X,X_n^\sim)}}$, implying that it is also a
sublattice of $X$ by \cite[Theorem~3.46]{ALIP:06}. Thus by Lemma~\ref{GJ-comp},
$Y=L_0(\sigma(Y))\cap X$.
Since $f\in Y$, it is clear that $\sigma(Y)\supset \sigma(f)$, so that
$$Y=L_0(\sigma(Y))\cap X\supset L_0(\sigma(f))\cap X.$$
For the reverse inclusion, note that, by definition of $O_f$, it is easily seen
that  each $g\in O_{f}$ is measurable with respect to $\sigma(f)$. Now for an
arbitrary $g\in Y$,
we can take, by the implication (b)$\Rightarrow$(c), a sequence $(g_n)$ in
$O_{f}$ such that $g_n\xrightarrow{a.e.}g$. Clearly, $g$ is also
$\sigma(f)$-measurable. Therefore, it follows that $$Y\subset
L_0(\sigma(f))\cap X,$$
and hence $Y=L_0(\sigma(f))\cap X$. This proves (a)$\Leftrightarrow$(b).
\end{proof}

\begin{proof}[Proof of Proposition~\ref{Kreps-Yan}]
We apply \cite[Theorem 3.1]{Jouini05} to $\big(X,\sigma(X,X^\sim_n)\big)$, and
verify that the following Assumptions (C) and (L) are satisfied.\par
Assumption (C): For every sequence $(x_n^*)$ in $X_n^\sim$, there exist strictly
positive numbers $(\alpha_n)$ such that $\sum_{n=1}^\infty \alpha_nx_n^*$
converges in $X_n^\sim$ with respect to the $\sigma(X_n^\sim, X)$-topology.\par
Assumption (L): Any
family $\{x^*_\gamma\}_{\gamma\in\Gamma}$ in $(X^\sim_n)_+$ admits a countable
subfamily $\{x^*_{\gamma_n}\}_{n\geq 1}$ such that, for any $x\in X_+$,
$x^*_{\gamma_n}(x)=0$ for all $n\geq 1$ implies $x^*_{\gamma}(x)=0$ for all
$\gamma\in\Gamma$.

We first verify that Assumption (C) is satisfied. Indeed, since $X$ is a Banach
lattice, we know that the order dual $X^\sim$ equals the norm dual $X^*$
(\cite[Corollary~4.5]{ALIP:06}) and is thus a Banach lattice. By
\cite[Theorem~1.57]{ALIP:06}, $X_n^\sim $ is a band (i.e., order closed
ideal) in $X^\sim=X^*$, and is thus norm closed in $X^*$ by
\cite[Theorem~3.46]{ALIP:06}.
Now for a sequence $(x_n^*) $ in $X_n^\sim$, put
$\alpha_n=\frac{1}{2^n\norm{x_n^*}+1}$ for each $n\geq 1$. Then $\alpha_n$'s are
strictly positive, and $\sum_1^\infty \alpha_nx_n^*$ converges in norm to some
$x^*\in X^*$. Since $X_n^\sim$ is norm closed in $X^*$, it follows that $x^*\in
X_n^\sim$. Clearly, $\sum_1^\infty \alpha_nx_n^*$ also converges to $x^*$ in the
$\sigma(X_n^\sim,X)$-topology.

We now verify that Assumption (L) is also satisfied. 
For the given family $\{x^*_\gamma\}_{\gamma\in\Gamma}$ in $(X_n^\sim)_+$,
put $N_\gamma:=\{x\in X:x_\gamma^*(\abs{x})=0\}$ and $C_\gamma:=N_\gamma^{\rm
d}:=\{x\in X:\abs{x}\wedge \abs{y}=0\mbox{ for all }y\in N_\gamma\}$ for each
$\gamma$.
Observe that $N_\gamma$ is a band. Indeed, it is clearly an ideal. 
If a net $(x_\alpha)$ in $N_\gamma$ converges in order to some $x\in X$, then
$\abs{x_\alpha-x}\xrightarrow{o}0$ implies that
$x^*_\gamma(\abs{x})=\bigabs{x^*_\gamma(\abs{x_\alpha})-x_\gamma^*(\abs{x})}\leq
x^*_\gamma(\bigabs{\abs{x_\alpha}-\abs{x}})\leq
x^*_\gamma(\abs{x_\alpha-x})\rightarrow0$, and consequently,
$x^*_\gamma(\abs{x})=0$, i.e., $x\in N_\gamma$.
This yields the \emph{band decomposition} $X=N_\gamma\oplus C_\gamma$ by
\cite[Theorem~1.42]{ALIP:06}.
Recall from  \cite[Corollary~5.22]{AA:02} that $X$ has a weak unit $u>0$, i.e.,
any function $x\in X$ is supported in $\{\omega:u(\omega)>0\}$ off a null set.
Write $u=f_\gamma+e_\gamma$ where $f_\gamma\in N_\gamma$ and $e_\gamma\in
C_\gamma$. Since $f_\gamma\wedge e_\gamma=0$, it is easily seen that there
exists $A_\gamma\in\Sigma$ such that
$e_\gamma=u\chi_{A_\gamma}$ and $f_\gamma=u\chi_{A_\gamma^c}$. 
Each function in $N_\gamma$ is disjoint with $e_\gamma$ and is thus
supported in $A_{\gamma}^c$ off a null set; each function in $C_\gamma$ is
disjoint with $f_\gamma$ and is thus supported in $A_\gamma$ off a null set.
By countable sup property of $X$, we choose $\{\gamma_n\}_{n=1}^\infty$ such
that 
$\sup_ne_{\gamma_n}=\sup_\gamma e_\gamma$.
If $x^*_{\gamma_n}(x)=0$ for all $n\geq 1$ and some $x\in X_+$, then we have
$x\in
N_{\gamma_n}$, so that $x\wedge e_{\gamma_n}=0$, for all $n\geq 1$.
It follows that $x\wedge \sup_\gamma e_\gamma=x\wedge \sup_n
e_{\gamma_n}=\sup_n(x\wedge e_{\gamma_n})=0$, and consequently, $x\wedge
e_\gamma=0$ for any $\gamma$. This implies that $x$ is supported in $A_\gamma^c$
off a null set and hence belongs to $N_\gamma$, i.e., $x^*_\gamma(x)=0$.
\end{proof}

\begin{proof}[Proof of Theorem~\ref{arbitrage}]
It is clear that $C:=\overline{M_0-X_+}^{\sigma(X,X^\sim_n)}$ is a
$\sigma(X,X^\sim_n)$-closed cone with $-X_+ \subset C$ and $C \cap X_+=\{0\}$
because of no free lunches. Thus by Proposition~\ref{Kreps-Yan}, there exists a
strictly positive linear functional $x^* \in X^\sim_n$ such that $x^*|_C \leq
0$. This
last condition implies that $x^*|_{M_0}=0$, so that $\ker \pi=M_0 \subset
\ker(x^*|_M)$. Hence, there exists $\lambda>0$ such that $\pi=\lambda x^*|_M$.
Therefore, for each $g\in L_0(\sigma(f))\cap
X=\overline{O_{f}}^{\sigma(X,X_n^\sim)}$, it is easily seen that the price
$p:=\lambda x^*(g)$ is consistent with $(M,\pi)$.
\end{proof}

\noindent\textbf{Acknowledgements.}
The authors are grateful to the anonymous referee and editors for their many
suggestions, which greatly improved the accessibility of
the paper.
The first author also thanks the Faculty of
Science at Ryerson University for the hospitality received during his visit
there.


\begin{thebibliography}{100}

\bibitem{AA:02}
Y.~Abramovich, C.D.~Aliprantis, \emph{An Invitation to Operator Theory},
Graduate Studies in Mathematics Vol.~50, American Mathematical Society, 2002.


\bibitem{ALIP:03}C.D.~Aliprantis, O.~Burkinshaw, \emph{Locally solid Riesz
spaces with
applications to economics}, Mathematical Surveys and Monographs No.~105,
American Mathematical Society, 2003.



\bibitem{ALIP:06}C.D.~Aliprantis,  O.~Burkinshaw, \emph{Positive Operators},
Springer
2006.



\bibitem{ALIP:07}C.D.~Aliprantis, K.C.~Border, \emph{Infinite dimensional
analysis}, Springer, 3rd {E}dition, 2007.


\bibitem{Biagini:08}
S.~Biagini, M.~Frittelli, A unified framework for utility maximization problems:
An Orlicz space approach, \emph{Ann.\ Appl.\ Probab.} 18(3), 2008, 929--966.

\bibitem{Biagini:10} S.~Biagini, M.~Frittelli, On the extension of the
Namioka-Klee theorem and on the Fatou property for risk measures, In:
\emph{Optimality
and Risk--Modern Trends in Mathematical Finance}, Springer Berlin
Heidelberg,2010, 1-28.

\bibitem{BAPT05}
A.M.~Baptista, Options and efficiency in multidate security markets,
\emph{Math.\ Finance} 15(4), 2005, 569--587.

\bibitem{BROWN:91}
D.J.~ Brown, S.A.~Ross, Spanning, valuation and options, \emph{Economic Theory}
1(1), 1991, 3--12.


\bibitem{Cheridito:09}
P.~Cheridito, T.~Li, Risk measures on Orlicz hearts, \emph{Math.\ Finance}
19(2), 2009, 189--214.

\bibitem{DAVIS02}M.~Davis, J.~Obl\'{o}j, Market Completion using Options. In:
\emph{Advances in Mathematics of Finance}, Volume 83 of Banach Center
Publications, 2002, 49–60.

\bibitem{Delbaen:09}F.~Delbaen, Risk measures for non-integrable random
variables, \emph{Math.\ Finance} 19(2), 2009, 329--333.

\bibitem{Delbaen:06}F.~Delbaen, W.~Schachermayer. The mathematics of arbitrage.
Springer Science \& Business Media, 2006.

\bibitem{GALVANI:09}
V.~Galvani, Option spanning with exogenous information structure, \emph{Journal
of Mathematical Economics} 45(1), 2009, 73--79.

\bibitem{GALVANI:10}
V.~Galvani, V.G~Troitsky, Options and efficiency in spaces of bounded claims,
\emph{Journal of Mathematical Economics} 46(4), 2010, 616--619.

\bibitem{GAOTX:15}
N.~Gao, V.~Troitsky, F.~Xanthos, Unbounded order convergence and
application to Ces\'aro means in Banach lattices, \emph{Israel J.\ Math.}, to
appear, arXiv:1509.07914.

\bibitem{GAOX:14} 
N.~Gao, F.~Xanthos, On the C-property and
$w^*$-representations of risk measures, \emph{Math.\ Finance}, to appear,
arXiv:1511.03159.

\bibitem{GREEN:87}
R.~Green, R.~Jarrow, Spanning and completeness in markets with contingent
claims, \emph{Journal of Economic Theory} 41, 1987, 202--210.


\bibitem{HUGO12}J.~Hugonnier, S.~Malamud, E.~Trubowitz, Endogenous
completeness of diffusion driven equilibrium markets, \emph{Econometrica} 80(3),
2012, 1249--1270.

\bibitem{Jouini05}
E.~ Jouini, C.~Napp, W.~Schachermayer, Arbitrage and state price deflators in a
general intertemporal framework. \emph{J. Math. Econom.} 41(6), 2005, 722--734.

\bibitem{FX1}
C.~Kountzakis, I.A~Polyrakis, F.~Xanthos, Nonreplication of options, 
\emph{Math.\ Finance} 22(3), 2012, 569--584.

\bibitem{Kreps81}
D.M.~Kreps, Arbitrage and equilibrium in economies with infinitely many
commodities, \emph{Journal of Mathematical Economics} 8(1), 1981, 15--35.

\bibitem{FX2}
I.A~Polyrakis, F.~Xanthos, Maximal submarkets that replicate any option,
\emph{Annals of Finance} 7(3), 2011, 407--423.


\bibitem{Meyer-Nieberg:91} P.~Meyer-Nieberg, \emph{Banach lattices},
{Universitext}, {Springer-Verlag}, {Berlin}, {1991}.


\bibitem{Nachman}
D.C.~Nachman, Spanning and completeness with options, \emph{Review of Financial
Studies} 1(3), 1988, 311--328.


\bibitem{Orihuela:12}
J.~Orihuela, M.R.~Gal\'an, Lebesgue property for convex risk measures on Orlicz
spaces, \emph{Mathematics and Financial Economics} 6(1),
2012, 15--35.


\bibitem{RH13}
F.~Riedel, F.~Herzberg, Existence of financial equilibria in continuous time
with potentially complete markets, \emph{Journal of Mathematical Economics}
49(5), 2013, 398--404.

\bibitem{ROKH05}
D.~B.~Rokhlin, The Kreps-Yan theorem for $L_\infty$, \emph{Int.\ J.\ Math.\
Math.\ Sci.} 17, 2005, 2749--2756.

\bibitem{ROKH09}
D.~B.~Rokhlin, The Kreps-Yan theorem for Banach ideal spaces. (Russian) Sibirsk.
Mat. Zh. 50 (2009), no. 1, 199--204; translation in Sib. Math. J. 50 (2009), no.
1, 162--166.

\bibitem{ROMA97}M.~Romano, N.~Touzi, Contingent Claims and Market
Completeness in a Stochastic Volatility Model, \emph{Math.\ Finance} 7(4), 1997,
399–412.


\bibitem{ROSS}
S.A~ Ross, Options and efficiency, \emph{The Quarterly Journal of Economics}
90(1), 1976, 75-89.

\bibitem{Schach02}
W. Schachermayer, No arbitrage: on the work of David Kreps: Arbitrage and
equilibrium in economies with infinitely many commodities, \emph{Positivity}
6(3), 2002, 359--368.

\bibitem{SCH15}
D.~C.~Schwarz, Market Completion with Derivative Securities, preprint,
arXiv:1506.00188.

\end{thebibliography}
\end{document}